\let\proof\@undefined
\let\endproof\@undefined
\title{\LARGE \bf
Safety-Critical Control of Stochastic Systems using \\Stochastic Control Barrier Functions
}
\author{Chuanzheng Wang$^\dagger$, Yiming Meng$^\dagger$, Stephen L. Smith, Jun Liu
\thanks{$^\dagger$Equal contribution}
\thanks{Chuanzheng Wang, Yiming Meng and Jun Liu are with the Department of Applied Mathematics,        
University of Waterloo, Waterloo, Ontario, Canada, 
        {\tt\small \{cz.wang, yiming.meng, j.liu\}@uwaterloo.ca}}%
\thanks{Stephen L. Smith is with the Department of Electrical and Computer Engineering, 
University of Waterloo, Waterloo, Ontario, Canada, 
        {\tt\small stephen.smith@uwaterloo.ca}}%
}
 \renewcommand{\epsilon}{\varepsilon}
 \renewcommand{\theta}{\vartheta}
 \renewcommand{\rho}{\varrho} 
 \renewcommand{\phi}{\varphi}
 \newcommand{\ub}{\mathbf{u}}
 \newtheorem{definition}{Definition}[section]
 \newtheorem{remark}[definition]{Remark}
 \newtheorem{prop}[definition]{Proposition}
 \newtheorem{thm}[definition]{Theorem}
 \newtheorem{ass}[definition]{Assumption}
 \newtheorem{prob}[definition]{Problem}
\DeclareMathOperator*{\argmin}{arg\,min}
\newcommand{\R}{\mathbb{R}}
\newcommand{\pp}{\mathbb{P}}
\newcommand{\U}{\mathcal{U}}
\newcommand{\X}{\mathcal{X}}
\newcommand{\E}{\mathbb{E}}
\newcommand{\Aa}{\mathcal{A}}
\DeclareMathOperator{\Tr}{tr}
\begin{document}

\maketitle
\thispagestyle{empty}
\pagestyle{empty}

\begin{abstract}
Control barrier functions have been widely used for synthesizing safety-critical controls, often via solving quadratic programs. However, the existence of Gaussian-type noise may lead to unsafe actions and result in severe consequences. In this paper, we study systems modeled by stochastic differential equations (SDEs) driven by Brownian motions. We propose a notion of stochastic control barrier functions (SCBFs) and show that SCBFs can significantly reduce the control efforts, especially in the presence of noise, compared to stochastic reciprocal control barrier functions (SRCBFs), and offer a less conservative estimation of safety probability, compared to stochastic zeroing control barrier functions (SZCBFs). Based on this less conservative probabilistic estimation for the proposed notion of SCBFs, we further extend the results to handle high relative degree safety constraints using high-order SCBFs. We demonstrate that the proposed SCBFs achieve good trade-offs of performance and control efforts, both through theoretical analysis and numerical simulations. 

\end{abstract}

\section{INTRODUCTION}\label{sec:intro}


For some real-world control problems, safety-critical control must be used in order to prevent severe consequences. It requires not only achieving control objectives, but also providing control actions with guaranteed safety \cite{garcia2015comprehensive}. Hence incorporating safety criteria is of great importance in practice when designing controllers. These requirements need to be satisfied in practice both with or without noise and disturbance. In \cite{lamport1977proving}, the notion of safety control was first proposed in the form of correctness. It was then formalized in \cite{alpern1985defining}, where the authors stated that a safety property stipulates that some ``bad thing'' does not happen during execution. There has been extensive research in safety verification problems using , e.g. discrete approximations \cite{ratschan2007safety} and computation of reachable sets \cite{girard2006efficient}.

One recent framework is to use control barrier functions (CBFs) to deal with safety criteria \cite{ames2019control}. CBFs are combined with control Lyapunov functions (CLFs) as constraints of quadratic programming (QP) problems in \cite{ames2014control}. The authors show that safety criteria can be transformed into linear constraints of the QP problems for control inputs. By solving the QP problems, we will find out a sequence of actions that generate safe trajectories during execution. It is shown in \cite{rauscher2016constrained} that finding safe control inputs by solving QP problems can be extended to an arbitrary number of constraints and any nominal control law. As a result, CBFs are widely used recently in a variety of applications such as lane keeping \cite{ames2016control} and obstacle avoidance \cite{chen2017obstacle}. Since solving QP problems using CBFs requires that the derivative of CBFs is dependent of the control input, which is not always the case for real time application of robotics \cite{hsu2015control}, CBFs are extended to handle high order relative degree as in \cite{nguyen2016exponential} and \cite{xiao2019control}. The authors of \cite{nguyen2016exponential} propose a way of designing exponential control barrier functions (ECBFs) using input-output linearization and in \cite{xiao2019control} the authors propose a more general form of high-order control barrier functions (HOCBFs). 

In practice, models used to design controllers are imperfect and this imperfection may lead to unsafe or even dangerous behavior. As a result, synthesizing a controller considering uncertainty is of great importance. Bounded disturbance is used to model such uncertainty as in \cite{taylor2020learning} and \cite{wang2020learning}, in which the time derivative of barrier functions are separated into the time derivative of the nominal barrier function and a remainder that can be approximated using neural networks. For systems driven by Gaussian-type  noise, stochastic differential equations against Brownian motions are usually used to characterize the effect of randomness. Previous studies on stochastic stability, given diffusion-type SDEs, have a wide variety of applications in verifying probabilistic quantification of safe set invariance \cite{kushner1967stochastic}. Investigations are focused on the worst-case safety verification utilizing SZCBFs regardless of the intensity of noise. The authors in \cite{santoyo2021barrier} proposed a method for synthesizing
polynomial state feedback controllers that achieve a
specified probability of safety based on the existing verification results. In connection to stochastic hybrid systems with more complex specifications, \cite{prajna2007framework} proposed a compositional framework for the construction of control barrier functions
for networks of continuous-time stochastic hybrid systems enforcing complex logic specifications expressed
by finite-state automata. However, the conservative quantification fails to be applied to high-order control systems due to the low-quality estimation. The authors in \cite{sarkar2020high} applied the pathwise strong safe set invariance certificate (generated by SRCBFs) from \cite{clark2019control} to high-order stochastic control systems. The SRCBF conditions are rather strong and effectively cancel the effects of diffusion and force the processes to stay invariant with probability 1. However, the above results admit unbounded control inputs, and hard constraints may cause failures of satisfying safety specifications.

\subsection{Contributions}
Motivated by the need to reduce potentially severe control constraints generated by SRCBF certificates\cite{clark2019control} in the neighborhood of the safety boundary (see Section \ref{sec:ex} for an illustrative example and Section \ref{sec:example1} for numerical comparisons), and improving the worst-case safety probability provided by SZCBF certificates,  we propose SCBFs as a middle ground to characterize safety properties for systems driven by Brownian motions in this paper. We show that SCBFs generate milder conditions compared to SRCBFs at the cost of sacrificing pathwise almost surely safety. The verification results, which provide a non-vanishing lower bound of safety probability for any finite time period, are still less conservative than widely used SZCBFs.

Unlike control systems with relative degree one, where optimal control schemes can be applied to synthesize finite-time almost-surely reachability/safety controller \cite[Chapter 5]{kushner1967stochastic} or even to characterize the probabilistic winning set of finite-time  reachability/safety with a priori probability requirement \cite{esfahani2016stochastic}, off-the-shelf optimal control schemes and numerical tools cannot be straightforwardly applied for stochastic control systems with high relative degree. Nonetheless, we make a first attempt to discover high-order SCBFs properties. 

The main contributions are summarized as follows. 
\begin{itemize}
    \item We propose a notion of stochastic control barrier functions (SCBFs) for safety-critical control. We show both theoretically and empirically that the proposed SCBFs achieve good trade-offs between mitigating the severe control constraints (potentially unbounded control inputs) and quantifying the worse-case safety probability. 
    \item Based on the less conservative worse-case safety probability, we extend our result to handle high relative degree safety constraints using high-order SCBFs and formally prove the worst-case safety probability in this case. 
    \item We validate our work in simulation and show that the proposed SCBFs with high relative degree have less control effort compared with SRCBFs and better safe probability compared with SZCBFs. 
\end{itemize}

\subsection*{Notation}

We denote the Euclidean space by $\R^n$ for $n>1$. Let $(\Omega,\mathscr{F}, \{\mathscr{F}_t\}_{t\geq 0}, \mathbb{P})$ be a filtered probability space. For any continuous-time stochastic processes $\{X_t\}_{t\geq 0}$ we use the shorthand notation $X:=\{X_t\}_{t\geq 0}$ instead, and denote $\langle X\rangle_t$ by the quadratic variation. We denote $\ub$ by a set of constrained control signals. In addition, let $\mathbb{P}_x$ be the probability measure of a stochastic process $X$ with the initial condition $X_0=x$ $\mathbb{P}$-a.s.; 
correspondingly, we denote $\E^x$ by the expectation w.r.t. the probability measure $\mathbb{P}_x$ (i.e. $\mathbb{E}^x[f(X_t)]:=\mathbb{E}[f(X_t)|X_0=x]$).

 We say a function $\alpha:\,
\R_{\ge 0}\rightarrow\R_{\ge 0}$ belongs to class $\mathcal{K}$ if it is continuous, zero at zero, and strictly increasing. It is said to belong to $\mathcal{K}_{\infty}$ if it belongs to class $\mathcal{K}$ and is unbounded. For a given set $\mathcal{C}$, we refer $\mathcal{C}^\circ$ as the interior. 

\section{Preliminary and Problem Definition} \label{sec:definition}
\subsection{System Description}
Given a filtered probability space $(\Omega,\mathscr{F},\{\mathscr{F}_t\}, \mathbb{P})$ with a natrual filtration, a state space $\mathcal{X}\subseteq\R^n$, a (compact) set of control values $\U\subset \R^p$, consider a continuous-time stochastic process $X:[0,\infty)\times\Omega\rightarrow\X$ that solves the SDE
\begin{equation}\label{E:sys}
dX_t=(f(X_t)+g(X_t)u(t))dt+\sigma(X_t)dW_t,
\end{equation}
where $u:\R_{\geq 0}\rightarrow\U$ is a bounded measurable control signal; $W$ is a $d$-dimensional standard $\{\mathscr{F}\}_t$-Brownian motion; $f:\X\rightarrow \R^n$ is a nonlinear vector field; $g:\X\rightarrow\R^{n\times p}$ and $\sigma:\X\rightarrow\R^{n\times d}$ are smooth mappings. 

\begin{ass}\label{ass: usual}
We make the following assumptions on system \eqref{E:sys} for the rest of this paper:
\begin{itemize}
    \item[(i)] There is a $\xi\in\X$ such that $\mathbb{P}[X_0=\xi]=1$;
    \item[(ii)] The mappings $f,g,\sigma$ satisfy local Lipschitz continuity and a linear growth condition.
\end{itemize}
\end{ass}
\begin{definition}[Strong solutions]
A stochastic process $X$ is said to be a strong solution to \eqref{E:sys} if it satisfies the following integral equation
\begin{equation}
    X_t=\xi+\int_0^t(f(X_s)+g(X_s)u(s))ds+\int_0^t \sigma(X_s)dW_s, 
\end{equation}
where the stochastic integral is constructed based on the given Brownian motion $W$. 
\end{definition} 
\begin{remark}

\begin{itemize}
    \item[(i)]Under Assumption \ref{ass: usual}, the SDE \eqref{E:sys} admits a unique strong solution. 
    \item[(ii)] Weak solutions are in the sense that Brownian motions are constructed  posteriori for the stochastic integrals. For instance, by It\^{o}'s Lemma, the 2-norm $Y_t:=\|W_t\|_2$ of a $d$-dimensional Brownian motion $W$ satisfies the nonlinear SDE 
    \begin{equation}\label{E: example_strong}
        dY_t=\nabla_x\|W_t\|_2dW_t+\frac{d-1}{2Y_t}dt.
    \end{equation}
    Define $\tilde{W}_t:=\int_0^t\nabla_x\|W_t\|_2dW_t$. Then one can verify $\langle \tilde{W}\rangle_t=t$. By L\'evy's characterization, $\tilde{W}$ is again a Brownian motion, and therefore $Y$ also solves 
        \begin{equation}\label{E: example_weak}
dY_t=d\tilde{W}_t+\frac{d-1}{2Y_t}dt.
    \end{equation}
\end{itemize}
The solution $Y$ to \eqref{E: example_weak} is weak w.r.t. $\tilde{W}$. Even though any two solutions
(weak or strong) are identical in law, for future references,
we exclude the consideration of weak solutions in this paper  to guarantee that Lyapunov-type analysis on sample paths behaviors is based on the same Brownian motion. 
\end{remark}
\begin{definition} [Infinitesimal generator of $X_t$]
Let $X$ be the strong solution to \eqref{E:sys}, the infinitesimal generator $\Aa$ of $X_t$ is defined by 
\begin{equation}
    \mathcal{A}h(x)=\lim\limits_{t\downarrow 0}\frac{\E^x[h(X_t)]-h(x)}{t};\;\;x\in\R^n,
\end{equation}
where $h:\R^n\rightarrow \R$ is in a set $\mathcal{D}(\Aa)$ (called the domain of the operator $\mathcal{A}$) of functions  such that the limit exists at $x$.
\end{definition}
\begin{prop}[Dynkin]
Let $X$ solve \eqref{E:sys}. If $h\in C_0^2(\mathbb{R}^n)$ then $h\in\mathcal{D}(\mathcal{A})$ and 
\begin{equation}
    \mathcal{A}h(x)=\frac{\partial h}{\partial x}(f(x)+g(x)u(t))+\frac{1}{2}\sum\limits_{i,j}\left(\sigma\sigma^T\right)_{i,j}(x)\frac{\partial^2h}{\partial x_i\partial x_j}.
\end{equation}
\end{prop}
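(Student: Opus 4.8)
The plan is to obtain the explicit form of $\Aa h$ from It\^o's formula and then isolate the generator through a short-time averaging limit. First I would apply It\^o's lemma to $h(X_t)$, using that $h\in C_0^2(\R^n)$ and that $X$ is the strong solution of \eqref{E:sys} with drift $f(x)+g(x)u(s)$ and diffusion $\sigma(x)$. Denoting by $\mathcal{L}h$ the second-order differential operator on the right-hand side of the claim (with the control entering through its value $u(s)$ at each time $s$), this yields the $\pp_x$-almost sure identity
\begin{equation*}
h(X_t)=h(x)+\int_0^t \mathcal{L}h(X_s)\,ds+\int_0^t \frac{\partial h}{\partial x}(X_s)\,\sigma(X_s)\,dW_s .
\end{equation*}

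Next I would take $\E^x[\,\cdot\,]$ and show the It\^o integral contributes nothing. The point is that $h\in C_0^2$ makes $\partial h/\partial x$ bounded with compact support $K$, and on $K$ the continuous map $\sigma$ is bounded; hence the integrand is uniformly bounded and $\E^x\!\int_0^t \|\tfrac{\partial h}{\partial x}(X_s)\sigma(X_s)\|^2\,ds<\infty$. This upgrades the stochastic integral from a local martingale to a true martingale of zero mean, so that
\begin{equation*}
\E^x[h(X_t)]-h(x)=\E^x\!\left[\int_0^t \mathcal{L}h(X_s)\,ds\right]=\int_0^t \E^x\!\left[\mathcal{L}h(X_s)\right]ds,
\end{equation*}
where the last step is Fubini, justified because $\mathcal{L}h$ is bounded (its terms all carry a derivative of $h$ supported in $K$, multiplied by the continuous coefficients $f,g,\sigma\sigma^T$, with $u$ bounded by assumption).

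Finally I would divide by $t$ and let $t\downarrow 0$. Setting $\phi(s):=\E^x[\mathcal{L}h(X_s)]$, the ratio in the definition of $\Aa$ equals $\tfrac1t\int_0^t\phi(s)\,ds$, which converges to $\phi(0)=\mathcal{L}h(x)$ once $\phi$ is right-continuous at $0$. I expect this continuity to be the main technical obstacle: it follows from the sample-path right-continuity $X_s\to x$ as $s\downarrow 0$, the continuity of $\mathcal{L}h$ (since $h\in C^2$ and $f,g,\sigma$ are continuous), and dominated convergence, with the dominating function again supplied by the compact support of $h$. Identifying the limit as $\mathcal{L}h(x)$ simultaneously shows $h\in\mathcal{D}(\Aa)$ and gives $\Aa h(x)=\mathcal{L}h(x)$, which is the asserted expression.
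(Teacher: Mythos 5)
Your proof is correct and is exactly the standard argument behind this classical result: It\^o's formula applied to $h(X_t)$, the compact support of $h$ upgrading the stochastic integral to a mean-zero true martingale, and a short-time averaging limit identified by continuity. Note that the paper itself provides no proof at all --- it invokes the result by name (Dynkin; cf.\ \O ksendal's characterization of the generator of an It\^o diffusion) --- so your write-up simply supplies the omitted standard derivation, and there is no alternative route in the paper to compare against.

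One caveat, inherited from the paper's own statement rather than introduced by you: because the control makes the drift time-dependent, your $\phi(s)=\E^x[\mathcal{L}h(X_s)]$ depends on $s$ not only through $X_s$ but also through $u(s)$, and the right-continuity of $\phi$ at $s=0$ does \emph{not} follow from the ingredients you list (path continuity of $X$, continuity of $\mathcal{L}h$ in $x$, dominated convergence). For a merely bounded measurable $u$ --- which is all the paper assumes --- the average $\tfrac1t\int_0^t u(s)\,ds$ need not converge as $t\downarrow 0$, so the limit defining $\Aa h(x)$ may fail to exist or may produce an averaged control value rather than the stated pointwise one. A clean fix is to either freeze the control at its current value (state-feedback/Markov policy), assume $u$ right-continuous at $0$, or read the formula at Lebesgue points of $u$; the paper's notation $u(t)$ on the right-hand side of a formula whose left-hand side depends only on $x$ glosses over precisely this issue.
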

\begin{remark}
The solution $X$ to \eqref{E:sys} is right continuous and satisfies strong Markov properties, and for any finite stopping time $\tau$ and $h\in C_0^2(\R^n)$, we have the following Dynkin's formula
$$\E^\xi[h(X_\tau)]=h(\xi)+\E^\xi\left[\int_0^\tau \Aa h(X_s)ds\right],$$
and therefore
$$h(X_\tau)=h(\xi)+\int_0^t\Aa h(X_s)ds+\int_0^\tau \nabla_x h(X_s)dW_s.$$
The above is an analogue of the evolution of $h$ along trajectories
$$h(x(t))=h(\xi)+\int_0^t [L_fh(x(s))+L_gh(x(s))u(s)]ds$$
driven by deterministic dynamics $\dot{x}=f(x)+g(x)u$, where $L_fh=\nabla_x h(x)\cdot f(x)$ and $L_gh=\nabla_x h(x)\cdot g(x)$.
\end{remark}

\subsection{Set Invariance and Control}
In deterministic settings, a set $\mathcal{C}\subseteq \X$ is said to be invariant for a dynamical system $\dot{x}=f(x)$ if, for all $x(0)\in\mathcal{C}$, the solution $x(t)$ is well defined and $x(t)\in \mathcal{C}$ for all $t\geq 0$. As for stochastic analogies, we have the following probabilistic characterization of set invariance.
\begin{definition}[Probabilistic set invariance]
Let $X$ be a stochastic process $X$. A set $\mathcal{C}\subset\X$ is said to be invariant w.r.t. a tuple $(x, T, p)$ for $X$, where $x\in \mathcal{C}$, $T\ge 0$, and $p\in[0,1]$, if $X_0=x$ a.s. implies 
\begin{equation}
    \mathbb{P}_x[X_t\in\mathcal{C}, \;0\leq t\leq T]\geq p.
\end{equation}
Moreover, if $\mathcal{C}\subset\X$ is invariant w.r.t. $(x, T, 1)$ for all $x\in\mathcal{C}$ and $T\geq 0$, then $\mathcal{C}$ is strongly invariant for $X$.
\end{definition}

For stochastic dynamical systems with controls such as system \eqref{E:sys}, we would like to define similar probabilistic set invariance property for the controlled processes. Before that, we first define the
following concepts.
\begin{definition}[Control strategy]
A control strategy is a set-valued function
\begin{equation}
   \kappa:\X\rightarrow 2^{\U}. 
\end{equation}
\end{definition}

We use a boldface $\ub$ to indicate a set of constrained control signals. A special set of such signals is given by a control strategy as defined below. 

\begin{definition}[State-dependent control]\label{E: control constraint}
We say that a control signal $u$ conforms to a control strategy $\kappa$ for (\ref{E:sys}), and writes $u\in \ub_\kappa$, if 
\begin{equation}
    u(t) \in \kappa(X_t),\quad\forall t\geq 0,
\end{equation}
where $X$ satisfies (\ref{E:sys}) with $u$ as input. The set of all control signals that confirm to $\kappa$ is denoted by $\ub_\kappa$. 
\end{definition}

\begin{definition}[Controlled probabilistic invariance]
Given system \eqref{E:sys} and a set of control signals $\ub$, a set $\mathcal{C}\subset\X$ is said to be controlled invariant under $\ub$ w.r.t. a tuple $(\xi, T, p)$ for system \eqref{E:sys}, if for all $u\in\ub$, $\mathcal{C}$ is invariant w.r.t. $(x, T, p)$ for $X$, where $X$ is the solution to \eqref{E:sys} with $u$ as input. 
Similarly, $\mathcal{C}\subset\X$ is strongly controlled invariant under $\ub$ if $\mathcal{C}\subset\X$ is controlled invariant under $\ub$ w.r.t. $(\xi, T, 1)$ for all $\xi\in\mathcal{C}$ and $T\geq 0$. 
\end{definition}

\subsection{Problem Definition}
For the rest of this paper, we consider a safe set of the form
\begin{equation}\label{E: safeset}
    \mathcal{C}:=\{x\in\X: h(x)\geq 0\},
\end{equation}
where $h:\X\rightarrow\R$ is a high-order continuously differentiable function. We also define the boundary and interior of $\mathcal{C}$ explicitly as below
\begin{equation}
    \partial\mathcal{C}:=\{x\in\X: h(x)= 0\},
\end{equation}
\begin{equation}
    \mathcal{C}^\circ:=\{x\in\X: h(x)> 0\}.
\end{equation}
\begin{prob}[Probabilistic set invariance control] \label{prob: main}
Given a compact set $\mathcal{C}\subset\X$ defined in \eqref{E: safeset}, a point $\xi\in\mathcal{C}^\circ$, and a tuple $(\xi,T,p)$, design a control strategy $\kappa$ such that under $\ub_\kappa$, the interior $\mathcal{C}^\circ$ is controlled invariant w.r.t. $(\xi,T,p)$  for the resulting strong solutions to \eqref{E:sys}.
\end{prob}
\section{Safe-critical Control Design via Barrier Functions}
In this section, we propose stochastic barrier certificates that can be used to design a
control strategy $\kappa$ for Problem \ref{prob: main}. Before proceeding, it is necessary
to review (stochastic) control barrier functions to interpret (probabilistic) 
set invariance. Note that we consider the safe set as constructed in \eqref{E: safeset}, where the function $h$ is given a priori. 
\subsection{Stochastic Reciprocal and Zeroing Barrier Functions}
Similar to the terminology for deterministic cases \cite{ ames2016control}, we introduce the construction of stochastic control barrier functions as follows.
\begin{definition}[SRCBF]
A function $B:\mathcal{C}^\circ\rightarrow \R$ is called a stochastic reciprocal control barrier function (SRCBF) for system \eqref{E:sys} if $B\in\mathcal{D}(\mathcal{A})$ and satisfies the following properties:
\begin{itemize}
    \item[(i)] there exist class-$\mathcal{K}$ functions $\alpha_1,\alpha_2$ such that for all $x\in\X$ we have
    \begin{equation}
        \frac{1}{\alpha_1(h(x))}\leq B(x)\leq \frac{1}{\alpha_2(h(x))};
    \end{equation}
    \item[(ii)] there exists a class-$\mathcal{K}$ function $\alpha_3$ such that
    \begin{equation}\label{E: rcbf strate}
        \inf\limits_{u\in\U}[\mathcal{A}B(x)-\alpha_3(h(x))]\leq 0.
    \end{equation}
\end{itemize}
We refer to the control strategy generated by \eqref{E: rcbf strate} as 
\begin{equation}
    \upsilon(x):=\{u\in\U: \mathcal{A}B(x)-\alpha_3(h(x))\leq 0\}
\end{equation}
and the corresponding control constraint  as $\ub_{\upsilon}$ (see in Definition \ref{E: control constraint}).
\end{definition}
\begin{prop}[\cite{clark2019control}]
Suppose that there exists an SRCBF for system \eqref{E:sys}. If $u(t)\in\ub_\upsilon$, then for all $t\geq 0$ and  $X_0=\xi\in\mathcal{C}^\circ$, we have $\mathbb{P}_\xi[X_t\in\mathcal{C}^\circ]=1$ for all $t\geq 0$.
\end{prop}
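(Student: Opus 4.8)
The plan is to show that the reciprocal blow-up of $B$ at $\partial\mathcal{C}$ forces the first hitting time of the boundary to be almost surely infinite, so that the continuous sample paths remain in $\mathcal{C}^\circ$ on every finite horizon. First I would record two structural facts. By property (i), $B(x)\geq 1/\alpha_1(h(x))>0$ on $\mathcal{C}^\circ$, so $B$ is strictly positive, and since $\alpha_1$ is class-$\mathcal{K}$ we have $B(x)\to\infty$ as $x\to\partial\mathcal{C}$ (where $h(x)\downarrow 0$); thus finiteness of $B(X_t)$ is equivalent to $X_t\in\mathcal{C}^\circ$. Because $\mathcal{C}$ is compact and $h,\alpha_3$ are continuous with $\alpha_3$ increasing, the drift is uniformly bounded along admissible trajectories: for $u\in\ub_\upsilon$ the defining condition $u(t)\in\upsilon(X_t)$ gives $\mathcal{A}B(X_t)\leq\alpha_3(h(X_t))\leq\alpha_3(\max_{x\in\mathcal{C}}h(x))=:M<\infty$.

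Next I would localize in order to apply Dynkin's formula, since $B\notin C_0^2(\R^n)$ and is unbounded near $\partial\mathcal{C}$. I define the level-crossing times $\tau_n:=\inf\{t\geq 0: B(X_t)\geq n\}$ and their increasing limit $\tau:=\lim_{n\to\infty}\tau_n$, which coincides with the first hitting time of $\partial\mathcal{C}$. On the stochastic interval $[0,t\wedge\tau_n]$ the process is confined to the set $\{x\in\mathcal{C}: B(x)\leq n\}$, which is a compact subset of $\mathcal{C}^\circ$ bounded away from the boundary (the bound $B(x)\leq n$ forces $h(x)\geq\alpha_1^{-1}(1/n)>0$). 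There $B$ is $C^2$ and $\mathcal{A}B$ is bounded, so Dynkin's formula applies to the stopped process and yields $\E^\xi[B(X_{t\wedge\tau_n})]=B(\xi)+\E^\xi[\int_0^{t\wedge\tau_n}\mathcal{A}B(X_s)\,ds]\leq B(\xi)+Mt$.

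The concluding step is a Markov/Chebyshev-type estimate exploiting positivity of $B$. By path continuity, $B(X_{\tau_n})=n$ on the event $\{\tau_n\leq t\}$, so $\E^\xi[B(X_{t\wedge\tau_n})]\geq n\,\mathbb{P}_\xi[\tau_n\leq t]$; combining this with the Dynkin bound gives $\mathbb{P}_\xi[\tau_n\leq t]\leq (B(\xi)+Mt)/n$. Since $\tau_n\uparrow\tau$ implies $\{\tau\leq t\}\subseteq\{\tau_n\leq t\}$ for every $n$, letting $n\to\infty$ shows $\mathbb{P}_\xi[\tau\leq t]=0$ for all $t\geq 0$. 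Hence the boundary is never reached in finite time, so $X_s\in\mathcal{C}^\circ$ for all $s\leq t$ almost surely, and in particular $\mathbb{P}_\xi[X_t\in\mathcal{C}^\circ]=1$.

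I expect the main obstacle to be the rigorous justification of Dynkin's formula for the unbounded, non-$C_0^2$ barrier $B$: everything hinges on the localization, i.e. verifying that each $\{B\leq n\}$ is a compact subset of $\mathcal{C}^\circ$ on which $B$ acts like an admissible test function, and that the increasing limit $\tau$ of the $\tau_n$ is exactly the exit time from $\mathcal{C}^\circ$. The remaining ingredients, namely the uniform drift bound $M$ from compactness and the Chebyshev estimate, are routine.
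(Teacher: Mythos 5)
Your proposal is correct, but there is nothing in the paper to compare it against line by line: the paper does not prove this proposition at all---it is imported from \cite{clark2019control}---and the only in-paper reasoning is the remark that follows it, which upgrades the stated marginal property $\mathbb{P}_\xi[X_t\in\mathcal{C}^\circ]=1$ (for each fixed $t$) to pathwise invariance using right-continuity of the solution and a countable dense set of times. Your argument---stopping at the level sets $\{B\le n\}$, applying Dynkin's formula to the stopped process to get $\E^\xi[B(X_{t\wedge\tau_n})]\le B(\xi)+Mt$, and then the Chebyshev-type bound $\mathbb{P}_\xi[\tau_n\le t]\le (B(\xi)+Mt)/n$---is the standard Khasminskii-style non-explosion proof and is essentially how the cited reference proceeds, so you have reconstructed the missing proof rather than diverged from it. It even buys something extra: since a continuous path can leave $\mathcal{C}^\circ$ only after $B(X_s)\ge 1/\alpha_1(h(X_s))$ has crossed every level $n$, your conclusion $\mathbb{P}_\xi[\tau\le t]=0$ is already the pathwise statement $\mathbb{P}_\xi[X_s\in\mathcal{C}^\circ,\;0\le s\le t]=1$, which subsumes the paper's remark. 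Three points to make explicit if you write this up: (i) the uniform drift bound $M=\alpha_3(\sup_{x\in\mathcal{C}}h(x))$ requires $h$ to be bounded on $\mathcal{C}$, which comes from the compactness of $\mathcal{C}$ assumed in Problem~\ref{prob: main}, not from the proposition itself; (ii) the inequality $h(x)\ge\alpha_1^{-1}(1/n)$ presupposes that $1/n$ lies in the range of $\alpha_1$ (class-$\mathcal{K}$ functions need not be unbounded), which is harmless because you only need large $n$; and (iii) you only need the one-sided fact that the exit time from $\mathcal{C}^\circ$ dominates $\lim_n\tau_n$ (exit forces prior blow-up of $B$, via the lower bound in property (i)), not the asserted coincidence of the two stopping times, whose converse direction would require the upper bound $B\le 1/\alpha_2(h)$ and is never used.
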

\begin{remark}
The result admits a $\mathbb{P}$-a.s. controlled invariant set for the marginals of $X$, and is easily extended to a pathwise $\pp$-a.s. controlled set invariance. Note that the strong solution is right continuous.
Let $\{t_n,\;n=1,2,...\}$ be the set of all rational numbers in $[0,\infty)$, and put 
$$\Omega^*:=\bigcap\limits_{1\leq n<\infty}\{\omega: X_{t_n}\in\mathcal{C}^\circ\},$$
then $\Omega^*\in\mathscr{F}$ (a $\sigma$-algebra is closed w.r.t. countable intersections). 
Since $\mathbb{Q}$ is dense in $\mathbb{R}$, $X$ is right continuous, and $h$ is continuous, we have 
$$\Omega^*:=\{\omega: X_{t}\in\mathcal{C}^\circ,\;\forall t \in[0,\infty)\}.$$
Note that  $\mathbb{P}_\xi[\Omega^*]\equiv 1$ from the marginal result. 
\end{remark}
\begin{definition}[SZCBF]
A function $B:\mathcal{C}\rightarrow \R$ is called a stochastic zeroing control barrier function (SZCBF) for system \eqref{E:sys} if $B\in\mathcal{D}(\mathcal{A})$ and 
\begin{enumerate}
    \item[(i)] $B(x)\geq 0$ for all $x\in\mathcal{C}$;
    \item[(ii)] $B(x)<0$ for all $x\notin \mathcal{C}$;
    \item[(iii)] there exists an extended $\mathcal{K}_{\infty}$ function $\alpha$ such that
\begin{equation}\label{E: zcbf certificate}
    \sup_{u\in\U}{[\mathcal{A}B(x)}+\alpha({h(x)})]\geq 0.
\end{equation}
\end{enumerate}
We refer the control strategy generated by \eqref{E: zcbf certificate} as 
\begin{equation}
    \upsilon(x):=\{u\in\U: {\mathcal{A}B(x)}+\alpha({h(x)})\geq 0\}
\end{equation}
and the corresponding set of constrained control signals  as $\ub_{\upsilon}$.
\end{definition}
\begin{prop}[Worst-case probabilistic quantification]
If the mapping $h$ is an SZCBF with linear function $kx$ as the class-$\mathcal{K}$ function, where $k\geq 0$,  then the control strategy is given as $\upsilon(x)=\{u\in\U: \mathcal{A}h(x)+kh(x)\geq 0\}$. Let $c=\sup_{x\in\mathcal{C}}h(x)$ and  $X_0=\xi\in\mathcal{C}^\circ$, then under any $u\in\ub$ 
we have the following worst-case probability estimation:
\begin{equation}
    \mathbb{P}_{\xi}\left[X_t\in\mathcal{C}^\circ,\;0\leq t\leq T\right]\geq \left(\frac{h(\xi)}{c}\right)e^{-cT}.
\end{equation}
\end{prop}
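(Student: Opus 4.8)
The plan is to reduce the path-wise safety statement to a statement about the first exit time and then to control that exit time with an exponentially weighted process built from $h$ itself. Define the exit time
\[
\tau := \inf\{t \ge 0 : X_t \notin \mathcal{C}^\circ\} = \inf\{t \ge 0 : h(X_t) \le 0\}.
\]
Since $X$ has continuous sample paths and $\mathcal{C}^\circ = \{h > 0\}$, the survival event coincides with $\{\tau > T\}$, so that
\[
\mathbb{P}_\xi[X_t \in \mathcal{C}^\circ,\ 0 \le t \le T] = \mathbb{P}_\xi[\tau > T],
\]
and, again by continuity of $h$ along continuous paths, $h(X_\tau) = 0$ on $\{\tau < \infty\}$. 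It therefore suffices to lower-bound $\mathbb{P}_\xi[\tau > T]$.

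First I would introduce the weighted, stopped process $M_t := e^{ct}\,h(X_{t \wedge \tau})$ and show it is a submartingale under any control conforming to $\upsilon$. Applying the It\^o/Dynkin formula from the Remark to $e^{ct}h(X_t)$ produces the drift $e^{ct}\,[\,c\,h(X_t) + \mathcal{A}h(X_t)\,]$ together with the local-martingale term $e^{ct}\nabla_x h(X_t)\sigma(X_t)\,dW_t$. Along the strategy $\upsilon$ the SZCBF certificate gives $\mathcal{A}h(X_t) \ge -k\,h(X_t)$; since $h \ge 0$ on $\mathcal{C}$ and $k \le c = \sup_{\mathcal{C}} h$, this upgrades to the dissipation estimate $\mathcal{A}h(X_t) \ge -c\,h(X_t)$, whence $c\,h(X_t) + \mathcal{A}h(X_t) \ge (c-k)\,h(X_t) \ge 0$ for $t < \tau$. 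Consequently the drift of $M_t$ is nonnegative and $M$ is a submartingale on $[0,T]$. Because the process is stopped at $\tau$, we have $0 \le h(X_{t\wedge\tau}) \le c$, so $M$ is bounded on $[0,T]$, the local-martingale term is a true martingale, and optional sampling is legitimate.

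Then I would apply optional sampling at the deterministic time $T$ to obtain $\mathbb{E}^\xi[M_T] \ge M_0 = h(\xi)$, and evaluate the left-hand side by splitting on whether the path has exited. Using $h(X_\tau) = 0$ on $\{\tau \le T\}$ and $h(X_T) \le c$ on $\{\tau > T\}$,
\[
\mathbb{E}^\xi[M_T] = \mathbb{E}^\xi\!\big[\,e^{cT} h(X_T)\,\mathbf{1}_{\{\tau > T\}}\,\big] \le c\,e^{cT}\,\mathbb{P}_\xi[\tau > T].
\]
Chaining the two inequalities yields $h(\xi) \le c\,e^{cT}\,\mathbb{P}_\xi[\tau > T]$, i.e.\ the claimed bound $\mathbb{P}_\xi[\tau > T] \ge (h(\xi)/c)\,e^{-cT}$. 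I expect the main obstacle to be the submartingale step of the second paragraph: one must verify that, along the closed loop driven by a conforming control, the drift $c\,h + \mathcal{A}h$ is genuinely nonnegative up to $\tau$, and it is precisely here that the certificate $\mathcal{A}h \ge -k\,h$ together with the normalization by $c = \sup_{\mathcal{C}} h$ is used to produce the decay exponent $c$ rather than the slope $k$. The remaining points — identifying the survival event with $\{\tau > T\}$, discarding the stochastic integral in expectation, and justifying optional sampling — are all controlled by stopping at $\tau$, which confines $h(X_{t\wedge\tau})$ to the bounded interval $[0,c]$.
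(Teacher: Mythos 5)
The decisive step of your argument fails: in the submartingale paragraph you write ``since $h \ge 0$ on $\mathcal{C}$ and $k \le c = \sup_{\mathcal{C}} h$'' --- but the inequality $k \le c$ is nowhere among the hypotheses, and it is false in general. The proposition assumes only $k \ge 0$: $k$ is the slope of the class-$\mathcal{K}$ function in the SZCBF condition, while $c$ is the supremum of $h$ over $\mathcal{C}$, and the two quantities are not comparable (replace $h$ by $\epsilon h$: the SZCBF condition still holds with the same $k$, yet $c$ becomes arbitrarily small). When $k > c$, the certificate $\mathcal{A}h \ge -kh$ gives only $c\,h + \mathcal{A}h \ge (c-k)h$, which is $\le 0$ before the exit time, so $M_t = e^{ct}h(X_{t\wedge\tau})$ need not be a submartingale, and the optional-sampling inequality $\E^\xi[M_T] \ge h(\xi)$ --- the heart of your proof --- is unavailable. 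The surrounding bookkeeping is fine: identifying the survival event with $\{\tau > T\}$, the boundedness of the stopped process on the compact set $\mathcal{C}$, and the split of $\E^\xi[M_T]$ using $h(X_\tau) = 0$ and $h \le c$ on $\{\tau > T\}$ are all correct.

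It is worth noting what your construction does prove, and how it relates to the paper's proof. With the weight $e^{kt}$ in place of $e^{ct}$, the drift of $e^{kt}h(X_{t\wedge\tau})$ is $e^{kt}\left(kh + \mathcal{A}h\right) \ge 0$ exactly by the certificate, and your argument then goes through verbatim to yield $\mathbb{P}_{\xi}\left[X_t\in\mathcal{C}^\circ,\;0\leq t\leq T\right] \ge \left(h(\xi)/c\right)e^{-kT}$. The paper argues differently in form but equivalently in substance: it sets $V = c - h$, derives $\mathcal{A}V \le -kV + kc$ along the strategy, and invokes the supermartingale-type exit bound of \cite[Theorem 3.1]{kushner1967stochastic} at level $\lambda = c$; with $\beta = kc$ that bound produces the exponential factor $e^{-\beta T/\lambda} = e^{-kT}$, so your repaired submartingale argument is essentially a self-contained proof of the very estimate the paper's citation supports, and the displayed factor $e^{-cT}$ follows from it only when $k \le c$. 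Hence you should either add the hypothesis $k \le c$ (after which your proof is correct as written) or prove the $e^{-kT}$ version; as it stands, the unsupported claim $k \le c$ is a genuine gap on which the entire submartingale construction rests.
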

\begin{proof}
Let $s=c-h(\xi)$ and $V(x)=c-h(x)$, then $V(x)\in [0,c]$ for all $x\in \mathcal{C}$. It is clear that $\mathcal{A}V(x)=-\mathcal{A}h(x)$ and $\upsilon(x)=\{u\in\U: \inf\limits_{u\in\U}[\mathcal{A}V(x)+kh(x)]\leq 0\}$. For $u(t)\in\ub_\upsilon$ for all $t\in[0,T]$, we have
$$\mathcal{A}V(x)\leq -kV(x)+kc.$$
By \cite[Theorem 3.1]{kushner1967stochastic}, 
 \begin{equation}
     \begin{split}
        \mathbb{P}_{\xi}\left[\sup\limits_{t\in[0,T]}V(X_t)\geq c\right]\leq 1-\left(1-\frac{s}{c}\right)e^{-cT}. 
     \end{split}
 \end{equation}
The result follows directly after this.
\end{proof}

In proposing stochastic control barrier functions for high-order control systems, the above SRCBF and SZCBF are building blocks. The authors in \cite{sarkar2020high} constructed high-order SRCBF and have found the sufficient conditions to guarantee pathwise set invariance with probability 1.
While the results seem strong, they come with significant costs. At the safety boundary, the control inputs need to be unbounded (as shown in the motivating example below and in the numerical experiments in Section \ref{Sec: sim}). On the other hand, the synthesis of controller for a high-order system via SZCBF is with mild constraints. The trade-off is that the probability estimation of set invariance is of low quality (note that the worst-case probability estimation using first-order barrier function is already lower bounded by a small value over a relatively long time period). We propose high-order stochastic control barrier functions in subsection C in order to reduce the high control efforts and improve the worst-case quantification.  Before proceeding, we illustrate the above motivation through examples.

\subsection{A Motivating Example}\label{sec:ex}

In this section, we will discuss how reciprocal control barrier functions (RCBFs) and SRCBFs perform differently around the boundary of the safe set. We show that for deterministic systems, RCBFs can guarantee safety with bounded control while for stochastic systems, SRCBFs require unbounded control in order to keep systems safe. We use the following two simple one-dimensional systems:
\begin{equation*}
    \dot{x}=x+u,
\end{equation*}
and
\begin{equation*}
    dx=x+u+\sigma dW,
\end{equation*}
for the comparison. Suppose that our safe set is $\{x\in\mathbb{R}\big|x<1\}$ so that we can use $h(x)=1-x$. Accordingly, a RCBF for the deterministic system is  $B=\frac{1}{h}$. We choose $\gamma=1$ as in \cite{ames2016control} and \cite{clark2019control} and, as a result, the condition using the RCBF is 
\begin{equation*}
    \begin{split}
        L_fB(x)+L_gB(x)u&=\frac{1}{h^2}(x+u)\leq h,\\
        u&\leq (1-x)^3-x.\\
    \end{split}
\end{equation*}
It means that we can control the system safely using a control bounded by $(1-x)^3-x$ when $x\to 1$. However, for the stochastic system, we have $\frac{\partial^2 B}{\partial x^2}=\frac{2}{h^3}$. Then the SRCBF condition is 
\begin{equation*}
    \begin{split}
        \mathcal{A}B&=\frac{1}{h^2}(x+u)+\frac{2\sigma^2}{h^3}\leq h,\\
        u&\leq h^3-x-\frac{2\sigma^2}{h}.
    \end{split}
\end{equation*}
As $x$ approaches 1, the control approaches $-\infty$. This implies that in order to guarantee safety, we requires an unbounded control around the boundary of the safe set for stochastic systems, which can be difficult to satisfy for some practical applications, as shown in Section~\ref{sec:example1}. \\

\subsection{High-order Stochastic Control Barrier Functions}
To obtain non-vanishing worst-case probability estimation (compared to SZCBF), we propose a safety certificate via a stochastic Lyapunov-like control barrier function \cite{kushner1967stochastic}.
\begin{definition}[Stochastic control barrier functions] A continuously differentiable function $B:\R^n\rightarrow \R$ is said to be a stochastic control barrier function (SCBF) if $B\in\mathcal{D}(\mathcal{A})$ and the following conditions are satisfied:
\begin{enumerate}
    \item[(i)] $B(x)\geq 0$ for all $x\in\mathcal{C}$;
    \item[(ii)] $B(x)<0$ for all $x\notin \mathcal{C}$;
    \item[(iii)] $
        \sup\limits_{u\in\U}\mathcal{A}B(x)\geq 0
    $.
\end{enumerate}
We refer the control strategy generated by (iii) as 
\begin{equation}
    \upsilon(x):=\{u\in\U: \mathcal{A}B(x)\geq 0\}
\end{equation}
and the corresponding set of constrained control signals  as $\ub_{\upsilon}$.
\end{definition}
\begin{remark}
Condition (iii) of the above definition is an analogue of $\sup\limits_{u\in\U}[L_fB(x)+L_gB(x)u]\geq 0$ for the deterministic settings. The consequence is such that $\mathbb{E}^\xi[B(x)]\geq B(x)$ for all $x\in\mathcal{C}$. A relaxation of condition (iii) is given in the deterministic case such that the set invariance can still be guaranteed \cite{wang2020learning},
$$\sup\limits_{u\in\U}[L_fB(x)+L_gB(x)u]\geq -\alpha(B(x)), $$
where $\alpha$ is a class-$\mathcal{K}$ function. If $\alpha(x)=kx$ where $k>0$, under the stochastic settings, the condition formulates an SZCBF and provides a much weaker quantitative estimation of the lower bound of satisfaction probability. In comparison with SZCBF, we provide the worst-case quantification in the following proposition.
\end{remark}
\begin{prop}\label{prop: main}
Suppose the mapping $h$ is an SCBF  with the corresponding control strategy  $\upsilon(x)$. Let $c=\sup_{x\in\mathcal{C}}h(x)$ and $X_0=\xi\in\mathcal{C}^\circ$, then under the set of constrained control signals $\ub_\upsilon$, we have the following worst-case probability estimation:
$$
\mathbb{P}_{\xi}\left[X_t\in\mathcal{C}^\circ,\;0\leq t<\infty\right]\geq \frac{h(\xi)}{c}.
$$
\end{prop}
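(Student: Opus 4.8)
The plan is to mirror the structure of the SZCBF proof above, but to exploit the stronger drift condition of the SCBF, which removes the exponential-in-time decay and yields the time-uniform bound. First I would introduce the auxiliary function $V(x):=c-h(x)$, so that $V(x)\in[0,c]$ for every $x\in\mathcal{C}$ and $V(\xi)=c-h(\xi)\in[0,c)$ since $\xi\in\mathcal{C}^\circ$. Because $h$ is itself the SCBF, condition (iii) guarantees that under any $u\in\ub_\upsilon$ we have $\mathcal{A}h(X_t)\geq 0$, hence $\mathcal{A}V(X_t)=-\mathcal{A}h(X_t)\leq 0$ whenever $X_t\in\mathcal{C}$. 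In other words, along trajectories that remain in the safe set, $V(X_t)$ is a nonnegative supermartingale. This is precisely the $k=0$ instance of the drift bound $\mathcal{A}V\leq -kV+kc$ used for the SZCBF, and it is the source of the sharper estimate.

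Next I would localize using the first exit time $\tau:=\inf\{t\geq 0:\,X_t\notin\mathcal{C}^\circ\}=\inf\{t\geq 0:\,V(X_t)\geq c\}$. On the event $\{t<\tau\}$ we have $X_t\in\mathcal{C}^\circ\subset\mathcal{C}$, so the drift estimate applies and the stopped process $V(X_{t\wedge\tau})$ takes values in $[0,c]$. Applying Dynkin's formula to the stopped process gives
\begin{equation*}
\E^\xi\!\left[V(X_{t\wedge\tau})\right]=V(\xi)+\E^\xi\!\left[\int_0^{t\wedge\tau}\mathcal{A}V(X_s)\,ds\right]\leq V(\xi).
\end{equation*}
Since $X$ has continuous paths and $h$ is continuous, $V(X_\tau)=c$ on $\{\tau<\infty\}$, so $V(X_{t\wedge\tau})\geq c$ on $\{\tau\leq t\}$; Markov's inequality (equivalently the supermartingale maximal inequality, i.e.\ \cite[Theorem 3.1]{kushner1967stochastic} specialized to $k=0$) then yields
\begin{equation*}
\mathbb{P}_\xi[\tau\leq t]\leq\frac{\E^\xi\!\left[V(X_{t\wedge\tau})\right]}{c}\leq\frac{V(\xi)}{c}=1-\frac{h(\xi)}{c}.
\end{equation*}

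Finally I would let $t\to\infty$. As $\{\tau\leq t\}\uparrow\{\tau<\infty\}$, continuity of $\mathbb{P}_\xi$ from below gives $\mathbb{P}_\xi[\tau<\infty]\leq 1-h(\xi)/c$, and the complementary event $\{\tau=\infty\}$ is exactly $\{X_t\in\mathcal{C}^\circ\ \text{for all}\ 0\leq t<\infty\}$. Therefore $\mathbb{P}_\xi[X_t\in\mathcal{C}^\circ,\,0\leq t<\infty]\geq h(\xi)/c$, as claimed; in contrast to the SZCBF bound $(h(\xi)/c)e^{-cT}$, no decaying factor appears and the estimate holds on the entire half-line.

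I expect the main obstacle to be the careful localization rather than the algebra. One must verify that the supermartingale property is only needed on $\mathcal{C}$ (so that stopping at $\tau$ is legitimate), justify the Dynkin/optional-stopping step for the possibly unbounded stopping time by working throughout with $t\wedge\tau$ and invoking the uniform bound $V(X_{t\wedge\tau})\in[0,c]$, and confirm the validity of the $t\to\infty$ limit. A secondary technical point is the identification of $\{\tau=\infty\}$ with perpetual residence in $\mathcal{C}^\circ$, which relies on right-continuity of $X$ together with continuity of $h$.
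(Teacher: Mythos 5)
Your proposal is correct and follows essentially the same route as the paper: both introduce $V(x)=c-h(x)$, observe that the SCBF condition under $\ub_\upsilon$ gives $\mathcal{A}V\leq 0$ so that $V$ is a nonnegative supermartingale along safe trajectories, invoke the resulting maximal inequality to bound the exit probability by $1-h(\xi)/c$ on finite horizons, and pass to the limit $T\to\infty$. The only difference is that the paper simply cites \cite[Lemma 2.1]{kushner1967stochastic} for this inequality, whereas you reprove it inline via Dynkin's formula, localization at the exit time, and Markov's inequality — a self-contained rendering of the same key lemma rather than a different argument.
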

\begin{proof}
Let $V=c-h(x)$, then $V(x)\geq 0$ for all $x\in\X$ and $\mathcal{A}V\leq 0$. Let $s=c-h(\xi)$ then $s\leq c$ by definition.  
The result for every finite time interval $t\in[0,T]$ is followed by \cite[Lemma 2.1]{kushner1967stochastic},
$$\mathbb{P}_{\xi}\left[X_t\in\mathcal{C}^\circ,\;0\leq t<T\right]\geq 1-\frac{s}{c}.$$
The result follows by letting $T\rightarrow\infty$.
\end{proof}
\begin{definition}
A function $B:\X\rightarrow \R$ is called a stochastic control barrier function with relative degree $r$ for system \eqref{E:sys} if $B\in\mathcal{D}(\mathcal{A}^r)$, and  $\mathcal{A}\cdot\mathcal{A}^{r-1}h(x)\neq 0$ as well as  $\mathcal{A}\cdot\mathcal{A}^{j-1}h(x)=0$ for $j=1,2,\ldots,r-1$ and $x\in \mathcal{C}$.
\end{definition}

If the system \eqref{E:sys} is an $r^{\text{th}}$-order stochastic control system, to steer the process $X$ to satisfy probabilistic set invariance w.r.t. $\mathcal{C}$, we recast the mapping $h$ as an SCBF with relative degree $r$. For $h\in\mathcal{D}(\mathcal{A}^r)$, 
we define a series of functions $b_0,b_j:\X\to\mathbb{R}$ such that for each $j=1,2,\dots,r$ $b_0,b_j\in\mathcal{D}(\mathcal{A})$  and
\begin{equation}\label{eq:hocbf}
\begin{split}
    b_0(x)&=h(x),\\
    b_j(x)&=\mathcal{A}\cdot \mathcal{A}^{j-1}b(x).\\
\end{split}
\end{equation}
 We further define the corresponding superlevel sets $\mathcal{C}_j$ for $j=1,2,\dots,r$ as
\begin{equation}\label{eq:hoset}
\begin{split}
  \mathcal{C}_j&=\{x\in\mathbb{R}^n:b_{j}(x)\geq 0\}.\\
\end{split}
\end{equation}
\begin{thm}\label{thm}
If the mapping $h$ is an SCBF with relative degree $r$, the corresponding control strategy is given as $\upsilon(x)=\{u\in\U: \mathcal{A}^rh(x)\geq 0\}$. Let $c_j=:\sup_{x\in\mathcal{C}_j}b_j(x)$ for each $j=0,1,...,r$ and $X_0=\xi\in\bigcap_{j=0}^r \mathcal{C}_j^\circ$. Then under the set of constrained control signals $\ub_\upsilon$, we have the following worst-case probability estimation:
$$\mathbb{P}_\xi[X_t\in\mathcal{C}^\circ,\;0\leq t<\infty]\geq \prod\limits_{j=0}^{r-1} \frac{b_j(\xi)}{c_j}.$$
\end{thm}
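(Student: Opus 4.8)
The plan is to reduce the high-order estimate to $r$ applications of the first-order bound of Proposition~\ref{prop: main}, chained together through conditional probabilities. First I would record the algebraic backbone implied by \eqref{eq:hocbf}: since $b_j=\mathcal{A}^j h$ we have $\mathcal{A} b_j(X_t)=b_{j+1}(X_t)$ for $j=0,\dots,r-1$, so the generator of each level equals the next function in the hierarchy. In particular the control strategy $\upsilon(x)=\{u:\mathcal{A}^r h(x)\geq 0\}$ forces $\mathcal{A} b_{r-1}=b_r\geq 0$ identically along the controlled path, while for $j<r-1$ the sign of $\mathcal{A} b_j=b_{j+1}$ is nonnegative \emph{precisely} on the set $\mathcal{C}_{j+1}$ from \eqref{eq:hoset}. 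This is the coupling mechanism between consecutive levels: $b_j$ is pushed upward (behaves as a submartingale) exactly while the process remains in the next set $\mathcal{C}_{j+1}$.

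Next I would set $E_j:=\{X_t\in\mathcal{C}_j^\circ\text{ for all }t\geq 0\}$ and note that $\mathcal{C}^\circ=\mathcal{C}_0^\circ$, so that $\mathbb{P}_\xi[X_t\in\mathcal{C}^\circ\ \forall t]=\mathbb{P}_\xi[E_0]\geq \mathbb{P}_\xi\big[\bigcap_{j=0}^{r-1}E_j\big]$. Expanding the intersection by the chain rule of conditional probability gives $\mathbb{P}_\xi\big[\bigcap_j E_j\big]=\prod_{j=0}^{r-1}\mathbb{P}_\xi\big[E_j\,\big|\,\bigcap_{k>j}E_k\big]$, with the top factor ($j=r-1$) unconditional. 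It then suffices to prove that each factor is at least $b_j(\xi)/c_j$.

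For each factor I would reuse the argument of Proposition~\ref{prop: main} with $b_j$ in place of $h$ and $c_j$ in place of $c$: set $V_j=c_j-b_j$, which is nonnegative with $V_j(\xi)=c_j-b_j(\xi)=:s_j$, and invoke the maximal inequality \cite[Lemma~2.1]{kushner1967stochastic} once $\mathcal{A} V_j=-b_{j+1}\leq 0$ is available. For $j=r-1$ this supermartingale property holds unconditionally, because $b_r\geq 0$ under the control, yielding $\mathbb{P}_\xi[E_{r-1}]\geq b_{r-1}(\xi)/c_{r-1}$. For $j<r-1$ the property $\mathcal{A} b_j=b_{j+1}\geq 0$ holds only on $\mathcal{C}_{j+1}$, which is exactly the conditioning event $\bigcap_{k>j}E_k$; conditioned on the process never leaving the higher-level sets, $V_j$ is a nonnegative supermartingale and the same maximal inequality formally gives $\mathbb{P}_\xi\big[E_j\,\big|\,\bigcap_{k>j}E_k\big]\geq 1-s_j/c_j=b_j(\xi)/c_j$. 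Multiplying the $r$ factors then produces $\prod_{j=0}^{r-1}b_j(\xi)/c_j$.

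The hard part will be making this conditional maximal inequality rigorous, since $\bigcap_{k>j}E_k$ is an event about the entire future, and conditioning on it can, via a Doob $h$-transform, destroy the supermartingale structure of $V_j$. I would circumvent the direct conditioning by working with the stopped process $V_j(X_{t\wedge T_{j+1}})$, where $T_{j+1}:=\inf\{t:X_t\notin\bigcap_{k>j}\mathcal{C}_k^\circ\}$ is the first exit from the higher-level intersection: on $[0,T_{j+1})$ one has $b_{j+1}>0$, so $V_j(X_{t\wedge T_{j+1}})$ is a genuine nonnegative supermartingale, and on the event $\{\tau_j<T_{j+1}\}$ (with $\tau_j$ the exit time of $\mathcal{C}_j^\circ$) it reaches the level $c_j$ by continuity of $b_j$. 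Combined with the strong Markov property and a backward induction on $j=r-1,\dots,0$ with hypothesis $\mathbb{P}_\xi[T_{j+1}=\infty]\geq\prod_{k>j}b_k(\xi)/c_k$, this is the route to upgrade the plain maximal inequality into the conditional statement; verifying that this stopping-time surgery indeed reproduces the multiplicative product, rather than the weaker subtractive bound $\mathbb{P}_\xi[T_j=\infty]\geq\mathbb{P}_\xi[T_{j+1}=\infty]-s_j/c_j$ that a naive optional-stopping argument yields, is the delicate point I would scrutinize most carefully.
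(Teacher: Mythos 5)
Your proposal is correct and follows essentially the same route as the paper's proof: both chain the first-order bound of Proposition~\ref{prop: main} through the nested superlevel sets $\mathcal{C}_j$ by conditioning on remaining in the higher-level sets (the paper via the law-of-total-probability recursion $p_j \geq \hat{p}_j\,p_{j+1}$, with the complementary term argued to vanish in the worst case via stopping times and a supermartingale convergence argument; you via the chain rule over $\bigcap_{j} E_j$), arriving at the same product $\prod_{j=0}^{r-1} b_j(\xi)/c_j$. The delicate point you flag---justifying the maximal inequality under conditioning on a future event---is not resolved in the paper either, which simply asserts $\mathbb{P}_\xi\left[E_j \mid E_{j+1}\right] \geq \hat{p}_j$ and applies Proposition~\ref{prop: main} under that conditioning, so your attempt is, if anything, more explicit than the original about the one step both arguments leave informal.
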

\begin{proof}
We introduce the notations 
$p_j:=\mathbb{P}_\xi[X_t\in\mathcal{C}_j^\circ,\;0\leq t<\infty]$ ,  $\hat{p}_j:=\mathbb{P}_\xi[X_t\in\mathcal{C}_j^\circ,\;0\leq t<\infty\;|\;\mathcal{A}b_j\geq 0]$.

The control signal $u(t)\in\{u\in\U: \mathcal{A}^rb(x)\geq 0\}$ for all $t\geq 0$ provides $\mathcal{A}b_{r-1}\geq 0$. By Proposition \ref{prop: main},
\begin{equation}
    \begin{split}
     p_{r-1} &=\mathbb{P}_\xi[X_t\in\mathcal{C}_{r-1}^\circ,\;0\leq t<\infty]\\
     & =\mathbb{P}_\xi[X_t\in\mathcal{C}_{r-1}^\circ,\;0\leq t<\infty\;|\;\mathcal{A}b_{r-1}\geq 0]\\
     &=\hat{p}_{r-1}.
    \end{split}
\end{equation}

For $j=0,1,...,r-2$, and $0\leq t<\infty$, we have the following recursion:
\begin{equation}\label{E: recursion}
\begin{split}
    & p_j=\mathbb{P}_\xi[X_t\in\mathcal{C}_{j}^\circ]\\ & =\mathbb{E}^{\xi}[\mathds{1}_{\{X_{t}\in C_{j}^\circ\}}\mathds{1}_{\{X_{t}\in C_{j+1}^\circ\}}]+\mathbb{E}^{\xi}[\mathds{1}_{\{X_{t}\in C_{j}^\circ\}}\mathds{1}_{\{X_{t}\notin C_{j+1}^\circ\}}]\\
    & =\mathbb{P}_\xi[X_{t}\in C_{j}^\circ\;|\mathds{1}_{\{X_{t}\in C_{j+1}^\circ\}}]\cdot\mathbb{P}_\xi[X_{t}\in C_{j+1}^\circ]\\
    & +\mathbb{P}_\xi[X_{t}\in C_{j}^\circ\;|\mathds{1}_{\{X_{t}\notin C_{j+1}^\circ\}}]\cdot\mathbb{P}_\xi[X_{t}\notin C_{j+1}^\circ],
\end{split}
\end{equation}
where we have used shorthand notations $\{X_t\in\mathcal{C}_{j}\}:=\{X_t\in\mathcal{C}_{j},\;0\leq t<\infty\}$ and $\{X_{t}\notin C_{j+1}\}:=\{X_{t}\notin C_{j+1}\;\text{for\;some}\;0\leq t<\infty\}$.
Indeed, we have $$ X_t=X_t\mathds{1}_{\{X_{t}\in C_{j+1}^\circ\}}+X_t\mathds{1}_{\{X_{t}\notin C_{j+1}^\circ\}},$$
then
$$ \mathbb{E}[X_t]=\mathbb{E}[X_t\mathds{1}_{\{X_{t}\in C_{j+1}^\circ\}}]+\mathbb{E}[X_t\mathds{1}_{\{X_{t}\notin C_{j+1}^\circ\}}],$$
and \eqref{E: recursion} follows. Note that
\begin{equation}
    \begin{split}
      &\mathbb{P}_\xi[X_{t}\in C_{j}^\circ\;|\mathds{1}_{
      \{X_{t}\in C_{j+1}^\circ\}}]\\
      &\geq \mathbb{P}_\xi[X_{t}\in C_{j}^\circ\;| \mathcal{A}b_{j}\geq 0]=\hat{p}_j,
    \end{split}
\end{equation}
and therefore 
$$\mathbb{P}_\xi[X_{t}\in C_{j}^\circ\;|\mathds{1}_{\{X_{t}\in C_{j+1}^\circ\}}]\cdot\mathbb{P}_\xi[X_{t}\in C_{j+1}^\circ]\geq \hat{p}_jp_{j+1}.$$

 Now define stopping times $\tau_j=\inf\{t: b_j(X_{t})\leq 0\}$ for $j=0,1,...,r-2$, then $b_j(X_{t\wedge \tau_j})\geq 0$ a.s.. In addition,
 $$X_{t\wedge\tau_j}=\mathds{1}_{\{\tau_j\leq \tau_{j+1}\}}X_{t\wedge\tau_j}+\mathds{1}_{\{\tau_j> \tau_{j+1}\}}X_{t\wedge\tau_j}$$
 
 Assume the worst scenario, which is for all $t\geq \tau_{j+1}$, we have $\mathcal{A}b_j\leq 0$. On $\{\mathcal{A}b_j<0\}\cap \{\tau_j> \tau_{j+1}\}$, we have $b_j(X_{\tau_{j+1}})>0$ and $\mathbb{E}^{\xi}[b_j(X_{t\wedge\tau_j})]\leq b_j(X_{\tau_{j+1}})-\int_{\tau_{j+1}}^t\varepsilon(s)ds$ for some $\varepsilon:\mathbb{R}_{\geq 0}\rightarrow \mathbb{R}_{>0}$ and $t\geq \tau_{j+1}$. Therefore, the process $b_j(X_{t\wedge\tau_j})$  is a nonnegative supermartingale and 
 $\mathbb{P}_\xi[\sup_{T\leq t<\infty}b_j(X_{t\wedge\tau_j})\geq \lambda]\leq \frac{b_j(X_{\tau_{j+1}})-\int_{\tau_{j+1}}^T\varepsilon(s)ds}{\lambda}$ by Doob's supermartingale inequality. For any $\lambda>0$, we can find a finite $T\geq \tau_{j+1}$ such that $\mathbb{P}_\xi [\sup_{T\leq t<\infty}b_j(X_{t\wedge\tau_j})\geq \lambda]=0$. Since $\lambda$ is arbitrarily selected, we must have $\mathbb{P}_\xi[\sup_{T\leq t<\infty}b_j(X_{t\wedge\tau_j})>0]=0$, which means $\tau_j$ is triggered within finite time.
 On the other hand, on $\{\mathcal{A}b_j<0\}\cap \{\tau_j\leq  \tau_{j+1}\}$, $\tau_j$ has been already triggered. Therefore, $\{X_t\notin\mathcal{C}_j^\circ\;\text{for\;some}\;0\leq t<\infty\}$ a.s. given $\{\mathcal{A}b_j<0\}$.  Hence,
 \begin{equation}
     \begin{split}
         &\mathbb{P}_\xi[X_{t}\in C_{j}^\circ\;|\mathds{1}_{\{X_{t}\notin C_{j+1}^\circ\}}]\\
         &\geq \mathbb{P}_\xi[X_{t}\in C_{j}^\circ\;|\mathds{1}_{\{X_{t}\notin C_{j+1}^\circ,\;\forall t\geq \tau_{j+1}\}}]\\
         &\geq \mathbb{P}_\xi[X_{t}\in C_{j}^\circ\;|\mathds{1}_{\{\mathcal{A}b_j<0,\;\forall t\geq \tau_{j+1}\}}]=0
     \end{split}
 \end{equation}
 Combining the above, for $j=0,1,...,r-2$, we have  
 $$p_j\geq \hat{p}_jp_{j+1},$$
 and ultimately $p_0\geq \prod\limits_{j=0}^{r-1}\hat{p}_j$. 
\end{proof}
\begin{remark}\label{rem: thm}
The above result estimates the lower bound of the safety probability given the constrained control signals $\ub_\upsilon$. Based on recursion relation \eqref{E: recursion} from the proof, we can easily drop the last term and still obtain the same result. However, we made a detour to argue that under some extreme conditions the worst case may happen. Indeed, we have assumed that for all $t\geq \tau_{j+1}$, we have $\mathcal{A}b_j\leq 0$. This conservative assumption  covers all the conditions in which the downward flow ($\mathcal{A}b_j\leq 0$) maintains for enough time before the sample paths cross back to $\mathcal{C}_{j+1}^\circ$ (i.e. $\mathcal{A}b_j>0$), such that within finite time $X$ will cross the boundary of $\mathcal{C}_j$.

Another implicit condition may cause the worst-case lower bound as well, that is when $\bigcup_{j=0}^{r-1}\{\mathcal{A}b_j=0,\;0\leq t<\infty\}$ is a $\mathbb{P}_\xi$-null set. This, however, is practically possible since the controller indirectly influences the value of $\mathcal{A}b_j$ for all $j<r$, the strong invariance of the level set $\{\mathcal{A}b_j=0\}$ is not guaranteed using QP scheme. On the other hand, a non-zero probability of $\{\mathcal{A}b_j=0\}$ for any $j<r$ makes a compensate for the lower bound estimation. On $\{\mathcal{A}b_j=0\}$, the process $\{b_j(X_{t\wedge\tau_j)}\}_{t\geq 0}$ is a lower bounded martingale and therefore converges with probability 1. We can estimate (using standard exit-time problem arguments) that $\{b_j(X_{t\wedge\tau_j})\}_{t\geq 0}$ reaching $0$ within a finite time has a probability $1-\frac{b_j(X_{\tau_{j+1}})}{c_j}$.
\end{remark}
\begin{remark}\label{rem: controller}
A nice selection of controller is to implicitly reduce the total time a sample path spends in $\{\mathcal{A}b_j\leq 0\}$. Since by the supermartingale property of $\{b_j(X_{t\wedge\tau_j})\}_{t\geq 0}$, one can estimate for any $\lambda>0$,
$$\mathbb{P}_\xi\left[\sup\limits_{\infty>t\geq T}b_j(X_{t\wedge \tau})\geq \lambda\right]\leq \frac{b_j(X_T)}{\lambda}.$$
The estimation depends on the solution of $\dot{b}_j\leq 0$, of which a better controller may implicitly increase the convergence rate(see examples in \cite[Theorem 3-4]{kushner1967stochastic}, such that the backward flow vanishes within a short time period. However, this is a challenging task by only steering the bottom-level flow, which in turn gives us a future research direction.
\end{remark}

\section{Simulation Results}\label{Sec: sim}
In this section, we use two examples to validate our result. We show that the proposed SCBFs have smaller control effort compared to SRCBFs and higher safe probability compared to SZCBFs. 

\subsection{Example 1}\label{sec:example1}
In the first example, we use an automatic cruise control example as in \cite{clark2019control}  and \cite{ames2016control}. The model is given by the following three-dimensional system:
\begin{equation*}
    d\begin{bmatrix}
    x_1\\
    x_2\\
    x_3\\
    \end{bmatrix}=\begin{bmatrix}
    -F_r(x)/M\\
    0\\
    x_2-x_1\\
    \end{bmatrix}dt+\begin{bmatrix}
    1/M\\
    0\\
    0\\
    \end{bmatrix}udt+\Sigma dW,
\end{equation*}
where $x_1$ and $x_2$ denote the velocity of the following vehicle and leading vehicle, respectively, and $x_3$ is the distance between two vehicles and 
\begin{equation*}
    \Sigma=\begin{bmatrix}
    \sigma_1&0&0\\
    0&0&0\\
    0&0&\sigma_2\\
    \end{bmatrix}.
\end{equation*} 
The aerodynamic drag is $F_r(x)=f_0+f_1x_1+f_2x_1^2$ with $f_0=0.1$, $f_1=5$, $f_2=0.25$ and the mass of the vehicle is $M=1650$. The initial state is chosen as $[x_1, x_2, x_3]=[18, 10, 150]^T$ and $W$ is a three-dimensional Brownian motion representing uncertainty in the velocity of following vehicle and the distance of two vehicles. The goal of the following vehicle is to achieve a desired velocity $x_d=22$ while keeping the collision constraint $h(x)=x_3-\tau x_1>0$. We use a Lyapunov function $V(x)=(x_1-x_d)^2$ to control the velocity of the following vehicle. We use $B(x)=\frac{1}{h(x)}$ as the SRCBF and $h(x)$ as the SCBF. We solve QP problems using SRCBF as
\begin{equation*}
    \begin{split}
        &[u^*, \delta^*]=\argmin_{u,\delta} \frac{1}{2}(u^2+\delta^2)\quad\text{s.t.}\\
        &\frac{\partial V(x)}{\partial x}(f(x)+g(x)u)+\frac{1}{2}\Tr\bigg({\Sigma^T\frac{\partial^2 V(x)}{\partial x^2 }\Sigma}\bigg)\leq\delta,\\
         &\frac{\partial B(x)}{\partial x}(f(x)+g(x)u)+\frac{1}{2}\Tr\bigg({\Sigma^T\frac{\partial^2 B(x)}{\partial x^2 }\Sigma}\bigg)\leq\frac{\gamma}{B(x)},
    \end{split}
\end{equation*}

and SCBF as

\begin{equation*}
    \begin{split}
        &[u^*, \delta^*]=\argmin_{u,\delta} \frac{1}{2}(u^2+\delta^2)\quad\text{s.t.}\\
        &\frac{\partial V(x)}{\partial x}(f(x)+g(x)u)+\frac{1}{2}\Tr\bigg({\Sigma^T\frac{\partial^2 V(x)}{\partial x^2 }\Sigma}\bigg)\leq\delta,\\
         &\frac{\partial h(x)}{\partial x}(f(x)+g(x)u)+\frac{1}{2}\Tr\bigg({\Sigma^T\frac{\partial^2 h(x)}{\partial x^2 }\Sigma}\bigg)\geq 0.\\
    \end{split}
\end{equation*}

We present the simulation results by showing the scaled control value $u/(Mg)$ and control effort $J=u^2$ for SCBF and SRCBF as in Figure~\ref{fig:u} and Figure~\ref{fig:J}. For simplicity, we choose noise level to be $\sigma_1=\sigma_2=1$. In both figures, the red curves are for the SRCBF and the blue curves are for the SCBF. From the figures, we can find out that SRCBF requires an impulse control signal to make the system safe when approaching the boundary. The peak value of $J$ using SRCBF is almost $1.75e^{9}$ while it is less than $0.1e^9$ for that of using SCBFs. In practice, this implies that we need a very large acceleration in order to keep the system staying within the safe set. As a result, we bound the scaled control to be $u/(Mg)>-0.5$ and show the safe probability as in Table~\ref{tab}. The table shows that for unbounded control, the safe probability using SRCBF is 90\% compared to 70\% by using SCBF. However, the safe probability drops to only 25\% when we use SRCBF while the safe probability is 65\% for SCBF under bounded control input. As a result, we can see that the safety probability obtained by SCBF is more robust to saturation of control inputs.

\begin{figure}[htbp]
	\centering
	\includegraphics[width=1.1\linewidth]{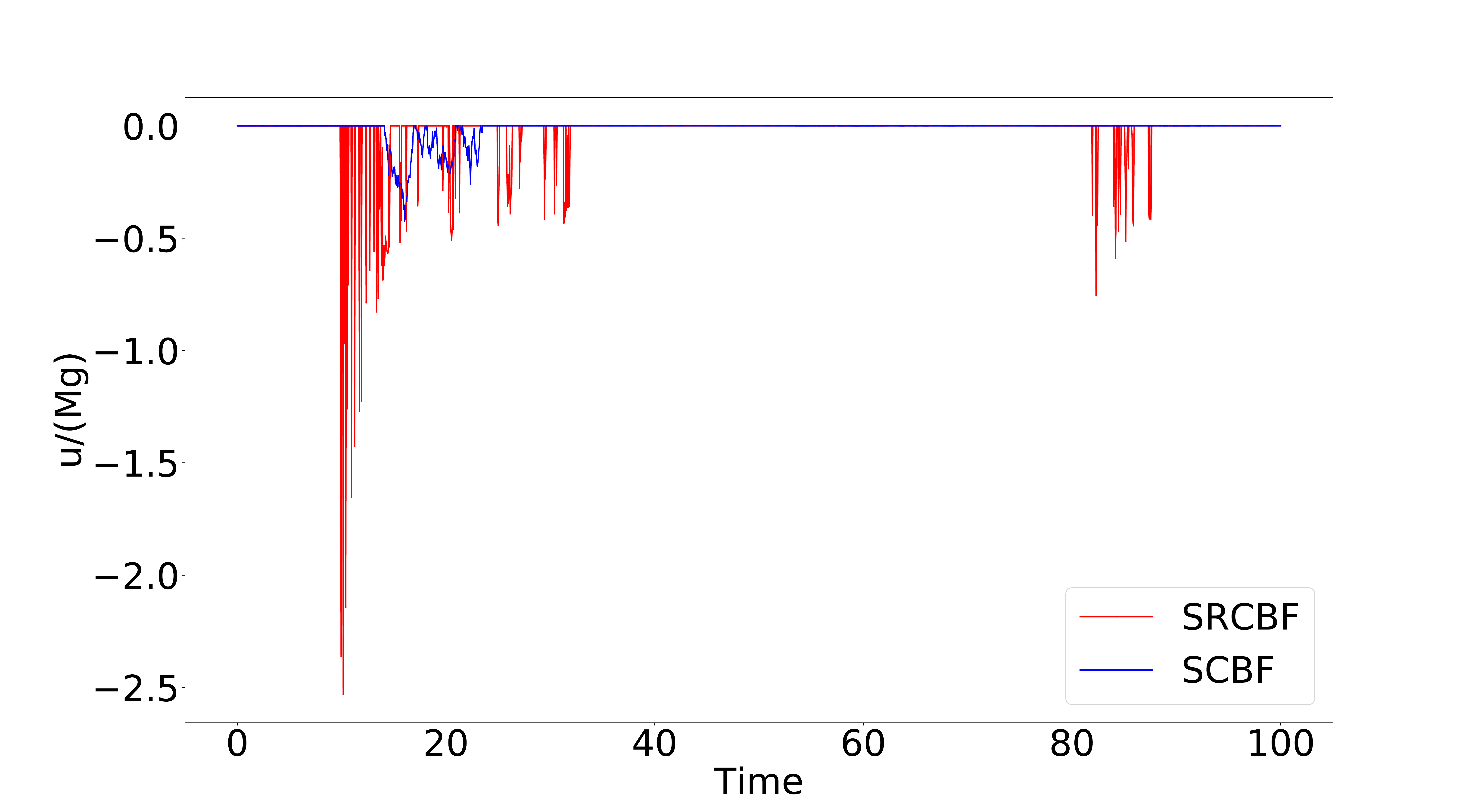}
	\caption{Plot of scaled control $u/(Mg)$ for Example 1.}
	\label{fig:u}
\end{figure}%

\begin{figure}[htbp]
	\includegraphics[width=1.1\linewidth]{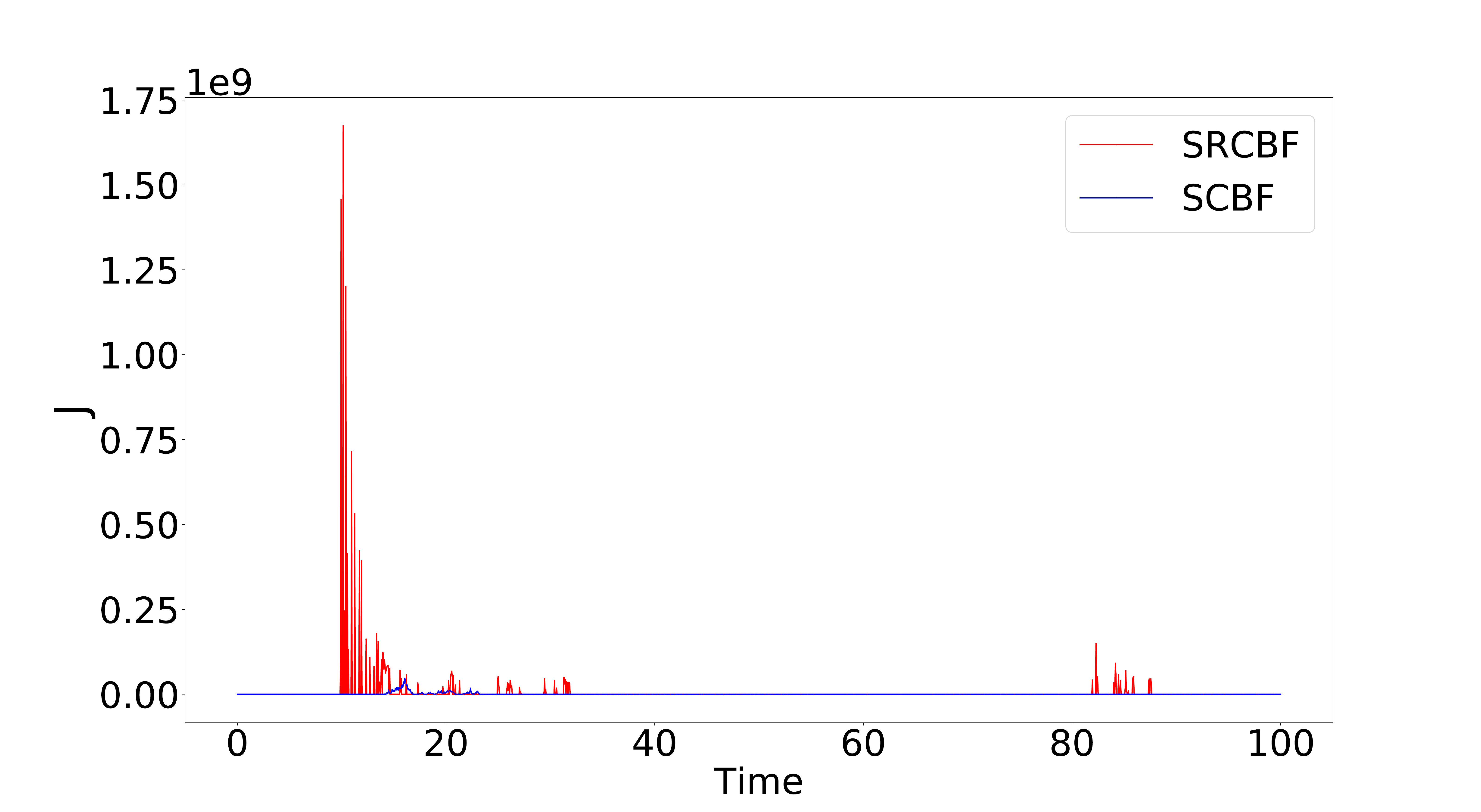}
	\caption{Plot of control effort $J=u^2$ for Example 1.}
    \label{fig:J}

\end{figure}

\begin{table}[htbp]
\centering
\begin{tabular}{lcccl}
\toprule
& SRCBF & SCBF \\ 
\midrule
Unbounded control & 90\% & 70\% &\\ 
Bounded control & 25\% & 65\% &\\
\bottomrule
\end{tabular}
\caption{Safe probability of Example 1 between SRCBF and SCBF under bounded (saturated) and unbounded control inputs. We sample 20 trajectories for each case and calculate the safe probability. The simulation step time is chosen to be $t=0.0005s$. We can see from this table that the proposed SCBF is more robust to control input saturation than SRCBF, which tend to require unbounded control inputs at the boundary of the safe set. See also Section \ref{sec:ex} for a simple illustrative example.}
\label{tab}
\end{table}

\subsection{Example 2}
In the second example, we test our SCBF using a differential drive model as in \cite{lavalle2006planning}:
\begin{equation*}
\begin{split}
ds=d\begin{bmatrix}
x\\
y\\
\theta\\
\end{bmatrix}=\begin{bmatrix}
\cos{\theta} & 0\\
\sin{\theta} & 0\\
0 & 1\\
\end{bmatrix}\begin{bmatrix}
v\\
w\\
\end{bmatrix}dt+\begin{bmatrix}
    \sigma_1&0&0\\
    0&\sigma_2&0\\
    0&0&0\\
    \end{bmatrix}dW
\end{split},
\end{equation*}
where $x$ and $y$ are the planar positions of the center of the vehicle, $\theta$ is its orientation, $v=2$ is its forward velocity, the angular velocity $w$ is the control of the system and $W$ is a standard Brownian motion representing uncertainty in $x$ and $y$. The working space of the vehicle is a circle centered at $(0,0)$ with a radius of $r=3$. Our objective is to control the vehicle within the working space and as a result, the safety requirement can be encoded using $h(s)=r^2-x^2-y^2$. We solve QP problems using constraints that are obtained from Theorem~\ref{thm} as
\begin{equation}
    \mathcal{A}\cdot\mathcal{A}(9-x^2-y^2)\geq 0
\end{equation}
We first test safe probability for different initial positions. We use our SCBF to test this probability under different noise levels within $[0,0.3]$. For simplicity, we set $\sigma_1=\sigma_2=\sigma$. For each value of $\sigma$, we sample 1000 trajectories and calculate the safe probability. We also compare the result between SCBF and a high-order SZCBF, which is an extension of the work \cite{santoyo2021barrier} as 
\begin{equation*}
    \begin{split}
        h_1(s)&=\mathcal{A}h(s)+\alpha_1h(s),\\
        h_2(s)&=\mathcal{A}h_1(s)+\alpha_2h(s).
    \end{split}
\end{equation*}
The constraints from SZCBF that $h_2(s)\geq 0$ are used to solve QP problems. We first compare the safe probability between SCBF and SZCBF for different noise level within $[0,0.2]$. For each value of $\sigma$, we randomly sample 1000 initial points and generate 1000 trajectories accordingly. We calculate safe probability over this 1000 trajectories and plot the result as in Figure~\ref{fig:compare}. We can find out that SCBF has a better safe probability over SZCBF. In another experiment, we randomly sample 10 initial points. Then we generate 500 trajectories using SCBF and SZCBF for each initial point. We fix the noise to be $\sigma=0.2$. The result is shown as in Figure~\ref{fig:init}. From the figure, we can find out that SCBF has a overall better performance than SZCBF under randomly sampled initial points. 



\begin{figure}[htbp]
    \centering
	\includegraphics[width=1\linewidth, scale=0.5]{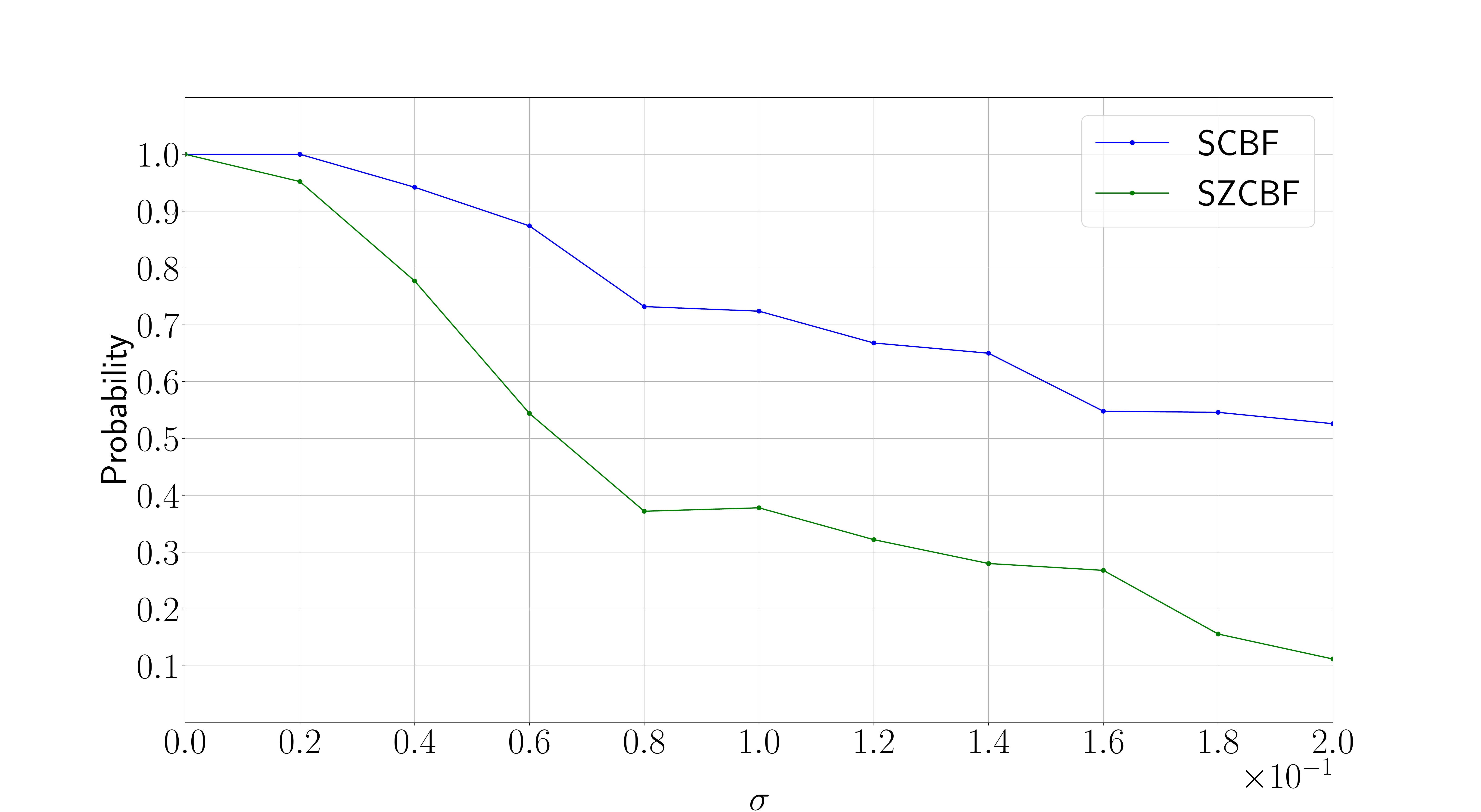}
	\caption{Safe probability between SCBF and SZCBF. We compare noise level within [0,0.2]. For each value of $\sigma$, we sample 1000 initial points to calculate safe probability.}
    \label{fig:compare}

\end{figure}

\begin{figure}[htbp]
	\centering
	\includegraphics[width=1\linewidth, scale=0.5]{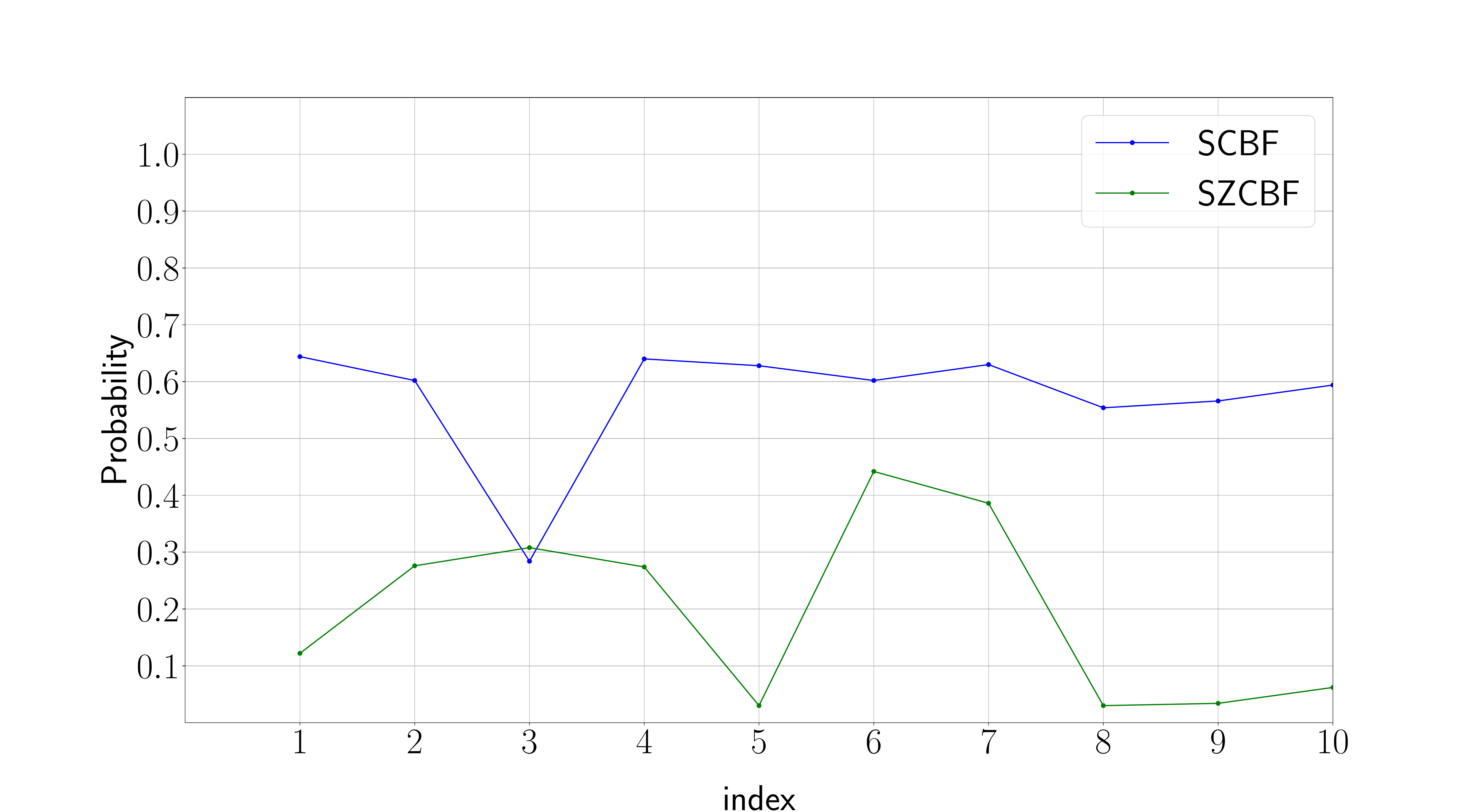}
	\caption{Safe probability of 10 randomly sample initial points. For each initial point, we sample 500 trajectories using SCBF and SZCBF respectively. The horizontal axis represents the index of the initial points. }
	\label{fig:init}
\end{figure}

\section{Conclusion}
In this work, considering the pros and cons of the existing formulations for stochastic barrier functions (such as frequently used SRCBFs and SZCBFs), we propose stochastic control barrier functions (SCBFs) for safety-critical control of stochastic systems and extend the worst-case safety probability estimation to high-order SCBFs. We show that the proposed SCBFs provide good trade-offs between the imposed control constraints and the conservatism in the estimation of safety probability, which are demonstrated both theoretically and empirically. In particular, the proposed scheme is utilized to control an automatic cruise control model and a differential drive mobile robot model. 

While the estimate of the safety probability in the high relative degree case appears to be conservative, as partly discussed in Remark \ref{rem: thm} and \ref{rem: controller}, worst-case scenarios do exist when the control scheme generated by SCBFs loses dominance on the intermediate supper-level sets. To accurately obtain the probabilistic winning sets, it is necessary to capture how the probability measure is distorted by the input processes. However, this may be computationally challenging for stochastic control systems with high relative degree. 

For future work, it is intriguing to connect Lyapunov-type characterizations with exit-time problems, such that despite of a direct bottom-level control, the probability of satisfaction for the controlled paths (dependent on time and initial positions) on each intermediate super level set can be provided with more accuracy. 
To embed the Lyapunov scenario into the existing HJB solver, topological analysis is needed to reduce the effect of discontinuity\cite{esfahani2016stochastic}. In addition, since the existing exit-time provides an off-line synthesis of controller, a direct combination of Lyapunov-type controller synthesis via quadratic programming and off-line exit-time probability solution may not be desirable. It would be applicable to consider learning algorithm to maintain the ignorable error whilst alleviate computational complexity. At last, more complex stochastic specification and control synthesis via Lyapunov could be investigated as in the deterministic cases.

\bibliography{root}{}

\begin{thebibliography}{10}

\bibitem{alpern1985defining}
Bowen Alpern and Fred~B Schneider.
\newblock Defining liveness.
\newblock {\em Information Processing Letters}, 21(4):181--185, 1985.

\bibitem{ames2019control}
Aaron~D Ames, Samuel Coogan, Magnus Egerstedt, Gennaro Notomista, Koushil
  Sreenath, and Paulo Tabuada.
\newblock Control barrier functions: Theory and applications.
\newblock In {\em Proc. of ECC}, pages 3420--3431. IEEE, 2019.

\bibitem{ames2014control}
Aaron~D Ames, Jessy~W Grizzle, and Paulo Tabuada.
\newblock Control barrier function based quadratic programs with application to
  adaptive cruise control.
\newblock In {\em Proc. of CDC}, pages 6271--6278. IEEE, 2014.

\bibitem{ames2016control}
Aaron~D Ames, Xiangru Xu, Jessy~W Grizzle, and Paulo Tabuada.
\newblock Control barrier function based quadratic programs for safety critical
  systems.
\newblock {\em IEEE Transactions on Automatic Control}, 62(8):3861--3876, 2016.

\bibitem{chen2017obstacle}
Yuxiao Chen, Huei Peng, and Jessy Grizzle.
\newblock Obstacle avoidance for low-speed autonomous vehicles with barrier
  function.
\newblock {\em IEEE Transactions on Control Systems Technology},
  26(1):194--206, 2017.

\bibitem{clark2019control}
Andrew Clark.
\newblock Control barrier functions for complete and incomplete information
  stochastic systems.
\newblock In {\em Proc. of ACC}, pages 2928--2935. IEEE, 2019.

\bibitem{esfahani2016stochastic}
Peyman~Mohajerin Esfahani, Debasish Chatterjee, and John Lygeros.
\newblock The stochastic reach-avoid problem and set characterization for
  diffusions.
\newblock {\em Automatica}, 70:43--56, 2016.

\bibitem{garcia2015comprehensive}
Javier Garc{\i}a and Fernando Fern{\'a}ndez.
\newblock A comprehensive survey on safe reinforcement learning.
\newblock {\em Journal of Machine Learning Research}, 16(1):1437--1480, 2015.

\bibitem{girard2006efficient}
Antoine Girard, Colas Le~Guernic, and Oded Maler.
\newblock Efficient computation of reachable sets of linear time-invariant
  systems with inputs.
\newblock In {\em Proc. of HSCC}, pages 257--271. Springer, 2006.

\bibitem{hsu2015control}
Shao-Chen Hsu, Xiangru Xu, and Aaron~D Ames.
\newblock Control barrier function based quadratic programs with application to
  bipedal robotic walking.
\newblock In {\em Proc. of ACC}, pages 4542--4548. IEEE, 2015.

\bibitem{kushner1967stochastic}
Harold~J Kushner.
\newblock {\em Stochastic Stability and Control}.
\newblock Academic Press, 1967.

\bibitem{lamport1977proving}
Leslie Lamport.
\newblock Proving the correctness of multiprocess programs.
\newblock {\em IEEE Transactions on Software Engineering}, (2):125--143, 1977.

\bibitem{lavalle2006planning}
Steven~M LaValle.
\newblock {\em Planning Algorithms}.
\newblock Cambridge University Press, 2006.

\bibitem{nguyen2016exponential}
Quan Nguyen and Koushil Sreenath.
\newblock Exponential control barrier functions for enforcing high
  relative-degree safety-critical constraints.
\newblock In {\em Proc. of ACC}, pages 322--328. IEEE, 2016.

\bibitem{prajna2007framework}
Stephen Prajna, Ali Jadbabaie, and George~J Pappas.
\newblock A framework for worst-case and stochastic safety verification using
  barrier certificates.
\newblock {\em IEEE Transactions on Automatic Control}, 52(8):1415--1428, 2007.

\bibitem{ratschan2007safety}
Stefan Ratschan and Zhikun She.
\newblock Safety verification of hybrid systems by constraint propagation-based
  abstraction refinement.
\newblock {\em ACM Transactions on Embedded Computing Systems (TECS)},
  6(1):8--es, 2007.

\bibitem{rauscher2016constrained}
Manuel Rauscher, Melanie Kimmel, and Sandra Hirche.
\newblock Constrained robot control using control barrier functions.
\newblock In {\em Proc. of IROS}, pages 279--285. IEEE, 2016.

\bibitem{santoyo2021barrier}
Cesar Santoyo, Maxence Dutreix, and Samuel Coogan.
\newblock A barrier function approach to finite-time stochastic system
  verification and control.
\newblock {\em Automatica}, 125:109439, 2021.

\bibitem{sarkar2020high}
Meenakshi Sarkar, Debasish Ghose, and Evangelos~A Theodorou.
\newblock High-relative degree stochastic control {L}yapunov and barrier
  functions.
\newblock {\em arXiv preprint arXiv:2004.03856}, 2020.

\bibitem{taylor2020learning}
Andrew Taylor, Andrew Singletary, Yisong Yue, and Aaron Ames.
\newblock Learning for safety-critical control with control barrier functions.
\newblock In {\em Proc. of L4DC}, pages 708--717. PMLR, 2020.

\bibitem{wang2020learning}
Chuanzheng Wang, Yinan Li, Yiming Meng, Stephen~L Smith, and Jun Liu.
\newblock Learning control barrier functions with high relative degree for
  safety-critical control.
\newblock In {\em Proc. of ECC, to appear}, 2021.

\bibitem{xiao2019control}
Wei Xiao and Calin Belta.
\newblock Control barrier functions for systems with high relative degree.
\newblock In {\em Proc. of CDC}, pages 474--479. IEEE, 2019.

\end{thebibliography}
\bibliographystyle{plain}
\end{document}